\newtheorem{thm}{Theorem}
\newtheorem{lemma}{Lemma}
\theoremstyle{remark}
\theoremstyle{definition}
\def\name{\textit{Prism}\xspace} 
\newcommand{\para}[1]{\noindent {\bf #1}}
\def\BibTeX{{\rm B\kern-.05em{\sc i\kern-.025em b}\kern-.08em
		T\kern-.1667em\lower.7ex\hbox{E}\kern-.125emX}}
\begin{document}

\title{Accelerating Edge Inference for Distributed MoE Models with Latency-Optimized Expert Placement}

\author{\IEEEauthorblockN{
        Tian Wu\IEEEauthorrefmark{1},
        Liming Wang\IEEEauthorrefmark{2},
        Zijian Wen\IEEEauthorrefmark{1},
        Xiaoxi Zhang\IEEEauthorrefmark{1}\IEEEauthorrefmark{5},
        Xu Chen\IEEEauthorrefmark{1},\\
        Jingpu Duan\IEEEauthorrefmark{3},
        Xianwei Zhang\IEEEauthorrefmark{1},
        Jinhang Zuo\IEEEauthorrefmark{2}
        }
        \IEEEauthorblockA{
        \IEEEauthorrefmark{1}Sun Yat-sen University
        \IEEEauthorrefmark{2}City University of Hong Kong
        \IEEEauthorrefmark{3}Peng Cheng Laboratory
        }
        \IEEEauthorblockA{
        Email: 
        \IEEEauthorrefmark{1}\{wutian,wenzj7\}@mail2.sysu.edu.cn,
        \IEEEauthorrefmark{2}w.lm@my.cityu.edu.hk,
        \IEEEauthorrefmark{3}duanjp@pcl.ac.cn
        }
        \IEEEauthorblockA{
        \IEEEauthorrefmark{2}jinhang.zuo@cityu.edu.hk,
        \IEEEauthorrefmark{1}\{zhangxx89,chenxu35,zhangxw79\}@mail.sysu.edu.cn
        }
        \thanks{\IEEEauthorrefmark{5}Corresponding author: Xiaoxi Zhang.}
}

    \IEEEoverridecommandlockouts
	\maketitle
	\IEEEpubidadjcol
	
	\begin{abstract}
        The emergence of Mixture-of-Experts (MoE) has transformed the scaling of large language models by enabling vast model capacity through sparse activation. Yet, converting these performance gains into practical edge deployment remains difficult, as the massive memory footprint and communication demands often overwhelm resource-limited environments. While centralized cloud-based solutions are available, they are frequently plagued by prohibitive infrastructure costs, latency issues, and privacy concerns. Moreover, existing edge-oriented optimizations largely overlook the complexities of heterogeneous hardware, focusing instead on isolated or uniform device setups.
        In response, this paper proposes \name, an inference framework engineered for collaborative MoE serving across diverse GPU-equipped edge servers. By leveraging the intrinsic sparsity and input locality of MoE workloads, \name minimizes inter-server communication and optimizes expert placement within diverse resource constraints. The framework integrates an activation-aware placement strategy that balances local request coverage with memory utilization, supplemented by a runtime migration mechanism to adapt expert distribution to dynamic workload changes.
        Experiments on contemporary MoE models and datasets demonstrate that \name reduces inference latency by up to 30.6\% and significantly lowers communication costs compared to state-of-the-art baselines, confirming the effectiveness of cooperative edge-based MoE serving.
	\end{abstract}

	\section{Introduction}

Mixture-of-Experts (MoE) architectures have emerged as a central design choice for training large language models (LLMs) at scale. By integrating multiple specialized subnetworks, known as experts, MoE achieves substantial performance gains without a proportional increase in computational cost. A lightweight gating mechanism dynamically routes each input token to only a small subset of experts, enabling sparse activation and efficient parallelism during training. This sparsity enables MoE to scale significantly while keeping training costs manageable, leading to its adoption in state-of-the-art models such as Switch Transformer~\cite{fedus2022switch}, Mixtral~\cite{jiang2024mixtral}, and DeepSeek-V3~\cite{liu2024deepseek}. As a result, MoE has become a foundational architecture for training and scaling modern LLMs.

\begin{figure}[htb]
    \centering
     \includegraphics[width=0.48\textwidth]{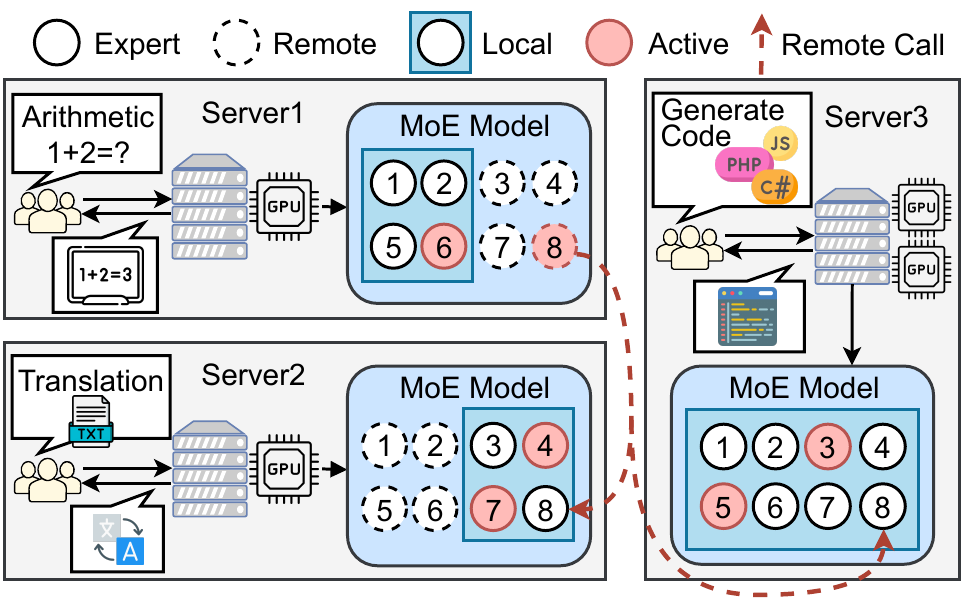}
    \caption{Illustration of distributed MoE inference across three edge servers. Each server handles requests from its own users and hosts a subset of experts.  When a required expert is not available locally, the server performs remote computation by invoking the expert on another node.} 
    \label{fig:moe_collaboration}
\end{figure}

Despite its efficiency during training, MoE inference remains resource-intensive, especially in terms of GPU memory, due to growing sizes of modern MoE models. Even with sparse activation, the cumulative memory footprint often exceeds the capacity of a single GPU. For example, Mixtral-8×7B~\cite{jiang2024mixtral} requires over 80 GB of GPU memory, far exceeding the 16--24 GB typically available on edge servers with commodity GPUs, such as those equipped with NVDIA RTX 4090 or A4000 GPUs commonly found in university labs, hospitals, small and medium-sized enterprises~\cite{nvdia4090,gpuion}. As a result, the standard practice is to offload inference to cloud servers with large-scale GPU clusters. However, relying solely on centralized cloud services presents many limitations: prohibitively high expenditures of renting cloud GPU instances, high response latency from network round-trips, and growing concerns over data privacy and regulatory compliance. 

\para{Why collaborative edge inference for MoE?} In response, distributed deployments of LLMs including MoE models on GPU-equipped edge servers has emerged as a compelling alternative. Many organizations (e.g., universities and research labs) coordinate underutilized lab servers or desktops with modest GPU capabilities in different locations~\cite{gpuion}. By doing so, edge-based inference offers significantly lower infrastructure costs and improved data locality.
However, these methods still incur high MoE inference latency, as they lack fine-grained analysis of MoE model architectures and expert activation patterns. A few recent works have proposed strategies for edge MoE, but they either target acceleration on a single edge device~\cite{edgemoe} or overlook the system and request heterogeneity in edge scenarios~\cite{fast-edge-moe,go2025moetuner}: {\em resource configurations, request volumes, and expert usages can all vary across distributed edge servers.}
To address this, we propose a collaborative MoE inference architecture, inspired by Expert Parallelism (EP), which distributes each layer of experts across GPUs. As illustrated in Fig.~\ref{fig:moe_collaboration}, inference is performed collaboratively across multiple devices, with each server hosting only a subset of the experts due to memory constraints. When a request arrives at Server 1 that activates an expert not present locally, such as Expert 8, the computation can still be completed by invoking the expert remotely from another server like Server 2 or Server 3. This setup leverages idle edge resources to form a cooperative inference layer, mitigating memory bottlenecks without relying on cloud execution.

\para{Importance and challenges of expert placement.} 
Despite its advantages, collaborative MoE edge inference still faces challenges, particularly inter-GPU/server communication latency. 
Existing distributed MoE inference methods~\cite{shoeybi2019megatron,go2025moetuner,deepseek2025eplb} including EP solutions designed for datacenters are unsuitable in edge scenarios due to two reasons: (i) edge servers vary in GPU memory, compute capacity, and network reliability, unlike datacenters with uniform hardware and high-speed interconnects; and (ii) redirecting requests across locations is costly, making it essential to process in proximity to data, a core principle of edge computing.
These constraints call for strategic expert placement to reduce inference latency by maximizing local execution and minimizing remote calls. 
Additionally, expert activation patterns vary across servers due to user behavior and workload differences. While stable over short periods, longer-term shifts can degrade performance if placements remain fixed. Lightweight, periodical placement adjustment is needed to adapt layouts as workloads evolve, without complex migration protocols or frequent re-optimization.

\para{Insights.} Fortunately, the inherent sparsity and modularity of MoE models offer an opportunity to address these challenges. Since only a small subset of experts is activated per input, systems can be designed to favor local execution. As explored in Section~\ref{sec:Activation_Patterns}, servers often exhibit distinct patterns of expert activation due to varying user tasks and input distributions, allowing each node to prioritize caching its most frequently accessed experts. As shown in Fig.~\ref{fig:moe_collaboration}, Server 1 and Server 2 each store only four out of eight experts, strategically selecting those most relevant to their local workloads. Consequently, Server 2 can process inputs routed to Experts 4 and 7 entirely locally, while Server 1 incurs occasional cross-server communication. Crucially, such remote calls remain infrequent, keeping end-to-end latency within acceptable bounds.
Leveraging these insights, we propose \name, a novel Distributed activation-aware collaborative MoE inference system tailored to resource-constrained edge environments. Our approach explicitly addresses the challenges of memory limitations, communication overhead, and workload heterogeneity, while capitalizing on the sparse activation patterns intrinsic to MoE architectures. The key contributions of this work are:

\begin{enumerate}[wide, labelindent=0pt]
\item \emph{System Design.} We introduce \name, a novel distributed inference system for collaborative MoE serving across memory-constrained, heterogeneous edge servers. It exploits activation sparsity and local workload patterns to minimize cross-server communication and enable scalable deployment.

\item \emph{Activation-aware Expert Placement.} We propose a data-driven placement algorithm that allocates experts for different requests across heterogeneous servers based on activation frequencies. Our design balances local coverage and memory usage, and offers theoretical approximation guarantees.

\item \emph{Lightweight Expert Migration.} To adapt to evolving workloads and input distributions, \name periodically re-evaluates placement and triggers efficient migrations when beneficial—without incurring high coordination overhead.

\item \emph{Comprehensive Experiments.} We evaluate \name using multiple MoE models and real-world data. Compared to existing approaches, \name reduces cross-server communication and achieves up to 30.6\% lower inference latency.
\end{enumerate}

        \section{Motivation}
\label{sec:pre}

\subsection{Expert Activation Patterns}
\label{sec:Activation_Patterns}
A core property of MoE models is their sparse activation: only a small subset of experts is activated per layer for each input. While expert activations may appear evenly distributed across general-purpose tasks, we make a key observation: for \textbf{specific task types}, the activation patterns become highly skewed and task-dependent.
As illustrated in Fig.~\ref{fig:layer_0_comparison}, when the Mixtral~\cite{jiang2024mixtral} model processes arithmetic tasks versus ASCII word recognition tasks from BIG-bench dataset~\cite{srivastava2023beyond}, the expert utilization at the first layer varies significantly. Arithmetic tasks exhibit a strongly skewed activation pattern where Expert 0 dominates, while Expert 1 is rarely used. In contrast, ASCII word recognition predominantly activates Expert 3.

This observation motivates a straightforward yet effective strategy: if an edge server primarily handles a specific task type, it should prioritize loading \textit{high-frequency experts} into local GPU memory to enable faster inference. The remaining low-frequency experts can be handled via memory offloading or remote cooperation with other devices (discussed in the next subsection). In short, {\em deploying frequently used experts locally while deferring infrequent ones offers a promising direction for resource-efficient MoE inference.}

\begin{figure}[htb]
\centering
\includegraphics[width=0.40\textwidth]{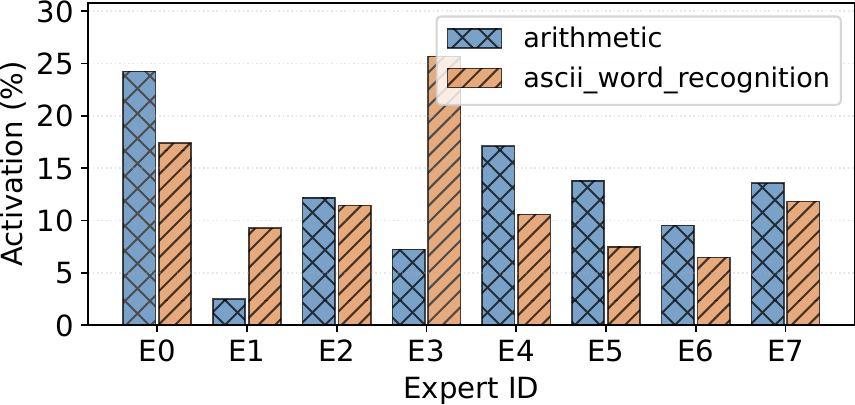}
\caption{Activation patterns across tasks.}
\label{fig:layer_0_comparison}
\end{figure}

\begin{figure}[htb]
\centering
\includegraphics[width=0.42\textwidth]{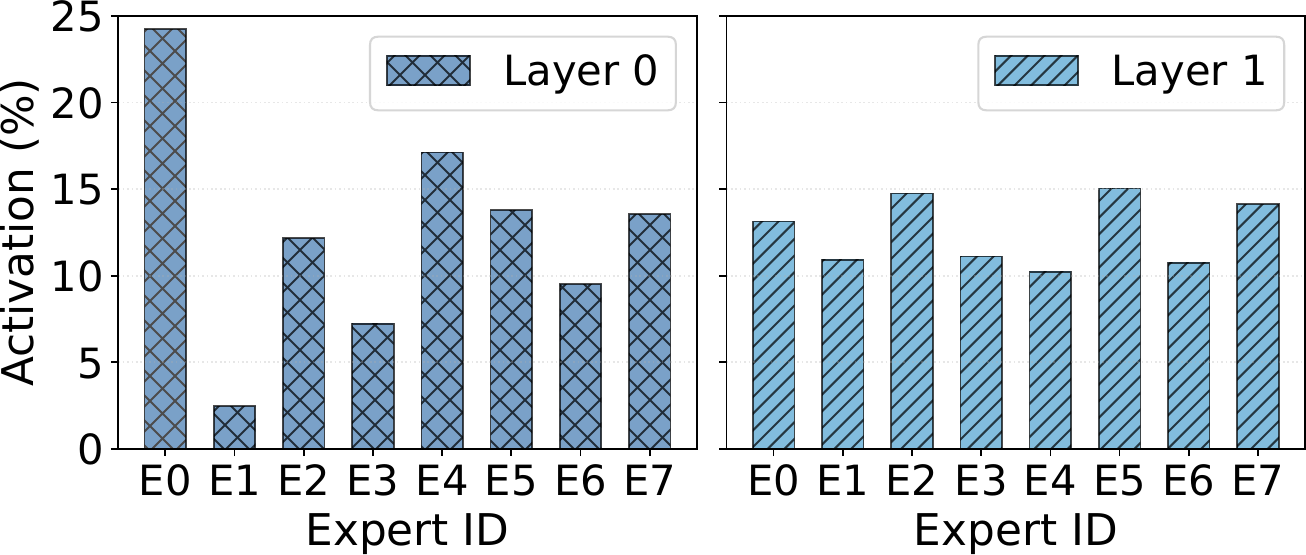}
\caption{Activation patterns across layers.}
\label{fig:layer_arithmetic_cdf}
\end{figure}

However, a further challenge emerges: expert activation patterns can also {\em vary across layers}, even within the same task. As shown in Fig.~\ref{fig:layer_arithmetic_cdf}, Layer 0 of the arithmetic task shows a highly skewed pattern (e.g., Expert 0 dominates), while Layer 1 demonstrates a more uniform distribution.
This layer-wise variation introduces a second dimension to the expert placement problem: {\em how to allocate limited GPU memory across layers}. Layer 0 could be handled locally using a smaller expert subset, while Layer 1 may require more memory to accommodate its broader activation footprint.

In summary, while activation patterns present valuable opportunities for optimizing distributed MoE inference, effective exploitation requires joint consideration of task-dependent and layer-wise variations. Our system design takes both dimensions into account when devising expert placement strategies. However, memory-local strategies alone are insufficient under system heterogeneity and dynamic workloads. To address these limitations, we now explore the potential of expert cooperation across edge nodes.

\subsection{The Need of Collaborative Inference}
As discussed above, one way to accommodate low-frequency experts is through memory offloading. For instance, MoE-Infinity~\cite{moe-infinity} enables single-device MoE inference by dynamically loading rarely used experts from RAM to GPU memory, while frequently activated experts remain cached on the GPU. However, such methods are limited to a single-machine setup and cannot take advantage of the collective computing power available in edge environments. In practice, edge servers often exhibit heterogeneous workloads and resource capacities. For example, Mooncake’s open-source online trace dataset~\cite{qin2025mooncake} shows that ToolAgent requests occur much more frequently than conversational queries. Moreover, the hardware heterogeneity among servers, such as differences in GPU availability or compute capability, leads to resource imbalance: {\em some nodes remain underutilized while others are bottlenecked}. To address this, a natural idea is to redistribute the load across servers.

One straightforward solution is to redirect requests to idle servers, i.e., reroute incoming requests to another server that performs inference locally using memory offloading. While this improves load balancing to some extent, our experiments show that it remains suboptimal.
To highlight this, we compare three methods in Table~\ref{tab:motivation_inf}. 
The experiment uses the Mixtral-8×7B model deployed across three simulated edge servers. Each server processes a distinct type of request drawn from three datasets: arithmetic reasoning, ASCII word recognition, and abstract narrative understanding (from BIG-bench~\cite{srivastava2023beyond}).
MoE-Infinity (w/ LB) represents the request-redirection baseline. In contrast, Naive Collaboration deploys experts randomly across the servers and enables distributed inference with remote expert calls. Despite its simplicity, the collaborative setup achieves significantly lower average latency, showing more balanced load across all servers.

These findings underscore the potential of collaborative MoE inference, where expert modules are distributed across devices and invoked jointly during inference. Even without fine-tuned optimization, this architecture improves load balancing and resource utilization. Motivated by this observation, our work goes further by proposing an activation-aware collaborative framework tailored for distributed MoE inference under system heterogeneity and memory constraints.

\begin{table}[htbp]
\centering
\caption{Average inference latency across different methods.}
\label{tab:motivation_inf}
\resizebox{0.49\textwidth}{!}{
\begin{tabular}{lcccc}
\hline
\textbf{Method} & \textbf{Server 1} & \textbf{Server 2} & \textbf{Server 3} & \textbf{Total Avg} \\
\hline
MoE-Infinity & 9.14 & {4.77} & {3.18} & 5.19 \\
MoE-Infinity (w/ LB) & {8.60} & {4.77} & \textbf{3.10} & {5.03} \\
Naive Collaboration & \textbf{4.96} & \textbf{4.34} & 3.40 & \textbf{4.11} \\
\hline
\end{tabular}
}
\end{table}
        \section{System Design}
\label{sec:sys}
\begin{figure*}[!t]
    \centering
    \includegraphics[width=0.9\textwidth]{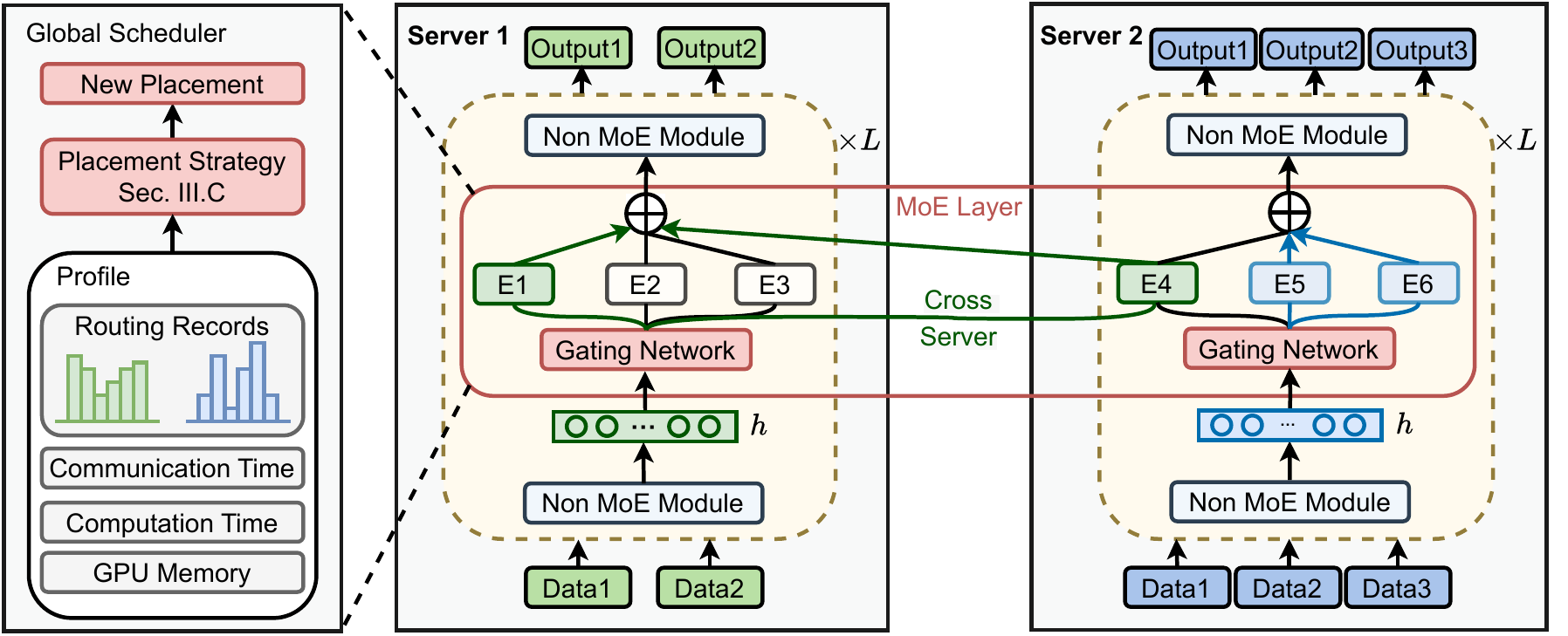}
    \caption{The workflow of \name. The system consists of two primary components working in coordination to enable efficient distributed inference for MoE models: a global scheduler and a runtime multi-server system that executes inference.}
    \label{fig:framework}
\end{figure*}    
\subsection{System Architecture}
\label{sec:arch}
As shown in Fig.~\ref{fig:framework}, our system adopts a distributed execution model coordinated by a global scheduler, which enables efficient distributed inference for MoE models.

On the left, the Global Scheduler acts as the central intelligence of the system, maintaining a comprehensive view of the distributed environment. It continuously collects and updates system-wide profiles, including cross-server communication latencies, computational capabilities, GPU memory, and historical patterns of expert activation across servers. These insights drive the Placement Strategy module, which dynamically determines the optimal allocation of experts to servers based on current resource utilization and historical workload. This adaptive placement mechanism minimizes cross-server communication and maximizes local hit rates for frequently accessed experts, with algorithmic details elaborated in Section~\ref{sec:our_placement}.
The right side of the figure shows that Server 1 and Server 2 host a subset of MoE experts (e.g., E1–E3 on Server 1, E4–E6 on Server 2). Each server is equipped with a local Gating Network that, for each input token, determines the most suitable experts to activate, i.e., {\em whether they are located locally or on remote servers}.
Input data (e.g., Data1) flows through the system, first passing through  non-MoE modules to compute the hidden state $h$. At the MoE layer, the Gating Network routes $h$ to the selected experts. In the example shown, Data1 arriving at Server 1 is routed to E1 (resident on Server 1) and E4 (hosted on Server 2). E1’s computation is local, while a remote call is initiated to Server 2 to execute E4. The results of experts are transmitted back to Server 1, where they are aggregated and passed through subsequent non-MoE layers and additional $L-1$ blocks to generate the output. 

Crucially, each inference request feeds system observability: gating decisions, expert invocation costs are logged and reported to the Global Scheduler. This enables continuous monitoring of workload and resource usage. Periodically, the scheduler analyzes the collected data to refine expert placement, migrating experts in response to shifting access patterns and maintaining efficiency in evolving edge environments.

\subsection{Problem Formulation}
\label{sec:model}
We then formulate the \textit{expert placement problem} for collaborative MoE inference into a constrained optimization problem. For theoretical clarity, we begin by assuming that the expert activation patterns for all input batches are known a priori. Under this assumption, the goal is to determine an optimal placement strategy that minimizes the total end-to-end latency when serving inference requests across a set of heterogeneous edge servers. While this assumption does not hold in practice, it provides a principled foundation for understanding the optimal structure of expert placement. In real-world systems, such activation statistics can be estimated from historical data or initialized randomly. The placement can then be refined online using migration strategies, which we discuss in Section~\ref{sec:our_placement}.

Consider a system consisting of $N$ edge servers, where each server $n \in [N]$ is equipped with $G_n$ GPUs. The MoE model contains $L$ layers, each with a set of experts denoted by $\mathcal{E}_l$ for layer $l \in [L]$. In practice, we have $|\mathcal{E}_l| = E$, for each layer $l$. Let $x_n^t$ denote the $t$-th input batch received by server $n$ and $X_n$ denote the set of all input batches assigned to $n$. An expert placement strategy is denoted by $\mathcal{P} = \{z_{n,g}^e\}$, where $z_{n,g}^e \in \{0,1\}$ indicates whether expert $e \in \cup_{l=1}^L \mathcal{E}_l$ is ($z_{n,g}^e=1$) placed on GPU $g$ of server $n$, or not ($z_{n,g}^e=0$). The objective is to minimize the total end-to-end inference latency across all input batches and servers, formalized as:
\begin{align}
\label{eq:min_cost}
\min_{\mathcal{P}} \quad \sum_{n = 1}^N \sum_{x_n^t \in X_n} \sum_{l = 1}^L T(x_n^t, l, \mathcal{P}).
\end{align}
Here, $T(x_n^t, l, \mathcal{P})$ denotes the processing latency at layer $l$ for input $x_n^t$ under placement strategy $\mathcal{P}$:
\begin{align*}
T(x_n^t, l, \mathcal{P}) = \max_{\hat{n}, g, e} \left( T_{\text{comm}}(y_{\hat{n},g}^e(x_n^t, l)) + T_{\text{comp}}(y_{\hat{n},g}^e(x_n^t, l)) \right)
\end{align*}
Here, $y_{\hat{n},g}^e(x_n^t, l)$ denotes the intermediate output from layer $l$ of input $x_n^t$ that is routed to expert $e$ on GPU $g$ of server $\hat{n}$. The terms $T_{\text{comm}}$ and $T_{\text{comp}}$ capture the communication and computation latencies, respectively. The outer maximization accounts for the fact that all expert outputs at a given layer must be aggregated before proceeding to the next layer; thus, the slowest invocation dominates the layer’s processing time.

However, in practice, this latency objective is difficult to optimize directly. Latency prediction is inherently noisy due to queuing dynamics, input variability, and hardware-level contention. To address this, we propose a tractable proxy objective that captures the expected volume of cross-server expert invocations, the key source of latency in distributed MoE inference.
This reformulation is motivated by our empirical observations: as shown in Fig.~\ref{fig:layer_latency}, the inference latency per layer increases sharply with the proportion of experts executed remotely. This degradation is primarily due to multi-stage communication overhead: transmitting activations over the network to a remote server’s RAM, loading them into GPU memory, executing the expert, and sending results back to the original server. This sequence introduces a significant performance bottleneck.

\begin{figure}[htb]
    \centering
    \includegraphics[width=0.3\textwidth]{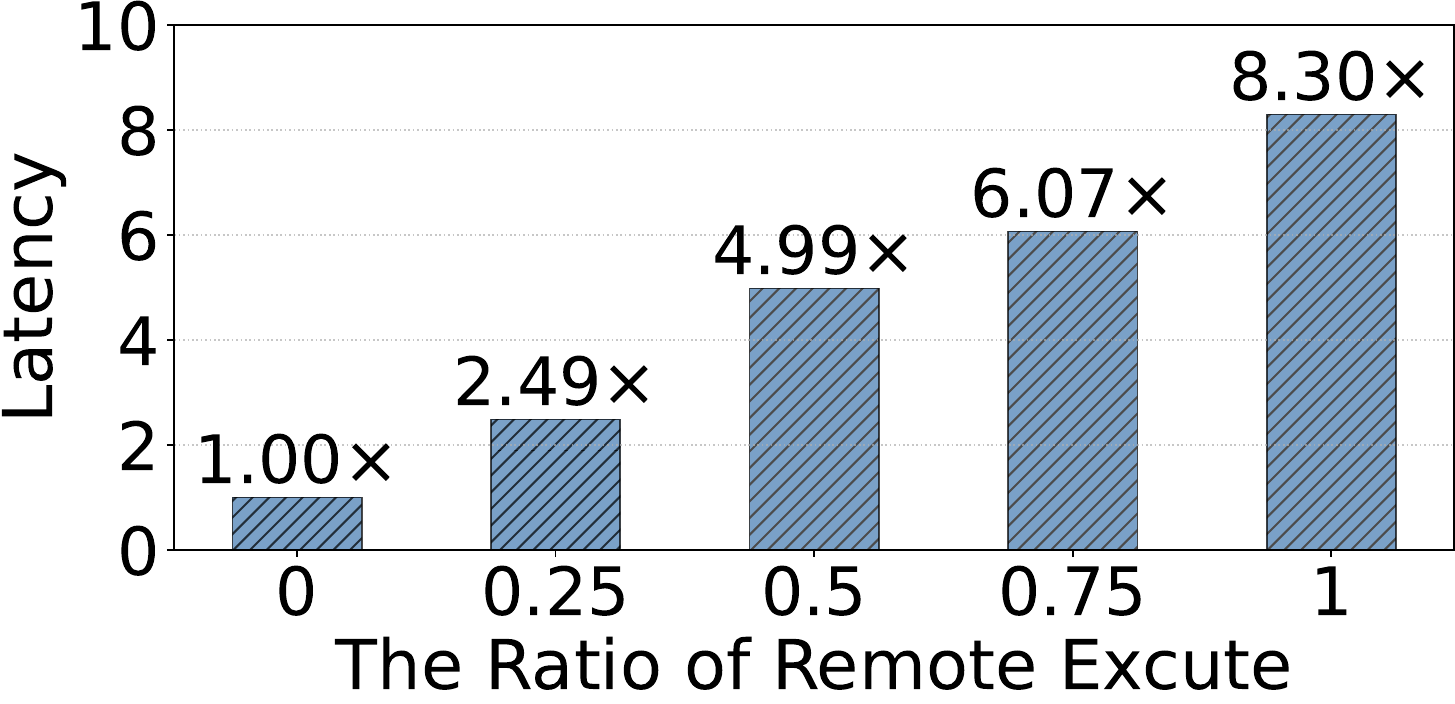}
    \caption{Layer-wise inference latency increases with the proportion of experts executed on remote servers.}
    \label{fig:layer_latency}
\end{figure}

Let $f_n^l(e)$ denote the empirical activation frequency of expert $e \in \mathcal{E}_l$ on server $n$ at layer $l$. To determine whether a remote invocation occurs, we define the indicator:
\begin{align*}
\mathbf{1}_{\text{remote}}(n,e) =
\begin{cases}
1, & \text{if } \sum_{g=1}^{G_n} z_{n,g}^e = 0 \\
0, & \text{otherwise}
\end{cases}
\end{align*}
That is, server $n$ must invoke expert $e$ remotely if the expert is not locally available on any of its GPUs. Using this indicator, we define the communication-aware proxy objective:
\begin{align}
\min_{\mathcal{P}} \quad \sum_{n = 1}^N \sum_{l = 1}^L \sum_{e \in \mathcal{E}_l} f_n^l(e) \cdot \mathbf{1}_{\text{remote}}(n,e)
\label{eq:comm_aware_obj}
\end{align}
This expression captures the expected number of remote expert invocations, weighted by how frequently each expert is used in each layer. Minimizing this proxy objective provides a practical and effective surrogate for reducing overall inference latency.
The optimization is subject to two constraints:
\begin{align*}
& \sum_{n = 1}^N \sum_{g = 1}^{G_n} z_{n,g}^e \geq 1, \quad \forall e \in \bigcup_{l=1}^L \mathcal{E}_l \\
& \sum_{l = 1}^L \sum_{e \in \mathcal{E}_l} z_{n,g}^e \cdot m_e \leq \texttt{mem}_{n,g}, \quad \forall n, g
\end{align*}
The first constraint ensures that every expert is deployed on at least one GPU in the system (expert coverage). The second enforces per-GPU memory limits, where $m_e$ is the memory footprint of expert $e$ and $\texttt{mem}_{n,g}$ denotes the available memory on GPU $g$ of server $n$.
Together, this formulation captures the key tradeoffs of distributed MoE inference: satisfying expert coverage and hardware constraints while minimizing cross-server communication. It provides a solid theoretical foundation for developing activation-aware and memory-efficient expert placement strategies.

\subsection{Activation-Aware Expert Placement}
\label{sec:our_placement}

We propose a two-stage algorithm for expert placement that respects memory and expert-coverage constraints while reducing remote invocations. The first stage (Algorithm~\ref{alg:expert_count}) determines how many experts from each MoE layer should be placed on each server. The second stage (Algorithm~\ref{alg:expert_assign}) selects which specific experts to assign, based on activation patterns.

\subsubsection{\textbf{Layer-wise Expert Count Allocation}}
Algorithm~\ref{alg:expert_count} distributes the total number of experts per layer across all servers by balancing activation diversity and memory capacity. It begins by allocating the expert count $N_{n,l}$ for each server $n$ and layer $l$ in proportion to the local activation diversity observed at that server. Specifically, we compute the normalized entropy of expert activations in layer $l$ on server $n$ as:
\[
v_{n,l} = -\sum_{e \in \mathcal{E}_l} p_e \log_2 p_e
\quad \text{where } p_e = \frac{f_n^l(e)}{\sum_{e' \in \mathcal{E}_l} f_n^l(e')}.
\]
This Shannon entropy captures the variety of expert usage: it is low when a few experts dominate (low diversity) and high when activations are spread evenly. Then, each expert count is initialized as $N_{n,l} = \left\lfloor \frac{M_n}{m_e} \cdot \frac{v_{n,l}}{\sum_{l=1}^L v_{n,l}} \right\rfloor$, with server memory defined as $M_n = \sum_g \texttt{mem}_{n,g}$. This heuristic reflects the intuition that more diverse activation patterns require a broader set of experts to be co-located.
To satisfy the expert-coverage constraint from the problem formulation, i.e., $\sum_n N_{n,l} \geq E$ for each layer, we adjust the initial allocation through expert rebalancing. If the total expert count in a layer $l$ is below $E$, we iteratively borrow experts from layers $l'$ that are currently over-provisioned. Servers are prioritized in descending order of memory capacity to ensure feasibility. This process repeats until all layer-wise totals match their required expert counts.

\begin{algorithm}[t]
\caption{Layer-wise Expert Count Allocation}
\label{alg:expert_count}
\SetAlgoVlined
\SetKwInOut{Input}{Input}
\SetKwInOut{Output}{Output}

\Input{Total experts per layer $E$, expert size $m_e$, \\
server memory $M_n = \sum_g \texttt{mem}_{n,g}$, \\
activation entropy $v_{n,l}$ for server $n$ layer $l$}
\Output{Expert count $N_{n,l}$ for server $n$ and layer $l$}

\vspace{1mm}
// Step 1: Initialization based on activation diversity.\\
\For{$n=1$ to $N$}{
    \For{$l=1$ to $L$}{
        $N_{n,l} \gets \left\lfloor \frac{M_n}{m_e} \cdot \frac{v_{n,l}}{\sum_{l=1}^L v_{n,l}} \right\rfloor$\;
    }
}

// Step 2: Adjust to match expert-coverage constraint.\\
\For{$l=1$ to $L$}{
    $N_{l,\text{total}} = \sum_{n=1}^N N_{n,l}$\;
}
\For{$l=1$ to $L$}{
    \While{$N_{l,\text{total}} < E$}{
        \For{$n=1$ to $N$ (sorted by $M_n$ descending)}{
            $l' = \arg\max_{l'} N_{l',\text{total}}$\;
            \If{$N_{n,l'} > 0$}{
                $N_{n,l'} \gets N_{n,l'} - 1$\;
                $N_{n,l} \gets N_{n,l} + 1$\;
                $N_{l',\text{total}} \gets N_{l',\text{total}} - 1$\;
                $N_{l,\text{total}} \gets N_{l,\text{total}} + 1$\;
                \If{$N_{l,\text{total}} == E$}{
                    \textbf{break}\;
                }
            }
        }
    }
}
\end{algorithm}

\textit{Justification of Entropy.}
We use Shannon entropy to quantify the diversity of expert activations at each layer. Intuitively, higher entropy indicates that many experts are invoked with comparable frequency, requiring more placements to ensure low communication overhead. In contrast, low-entropy distributions can be served well with fewer local experts.
\begin{lemma}[Entropy-Guided Coverage Lower Bound]
Let $\mathbf{p} = (p_1, \dots, p_E)$ be the activation distribution over $E$ experts in a layer, and let $H(\mathbf{p}) = -\sum_{e=1}^E p_e \log p_e$ denote its Shannon entropy. Then, for any $\delta \in (0,1)$, the number of experts needed to cover at least $(1-\delta)$ of the activation mass satisfies:
\[
k_\delta \geq 2^{H(\mathbf{p}) - \delta \log E}.
\]
\end{lemma}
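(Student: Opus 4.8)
\emph{Proof plan.} The plan is to rephrase the claim into a clean entropy bound and then to split the entropy along the cut defined by the top experts. After taking base-two logarithms of both sides and rearranging (here and below $\log=\log_2$, matching the entropy convention used for $v_{n,l}$ above), the inequality $k_\delta \ge 2^{H(\mathbf{p})-\delta\log E}$ is equivalent to $H(\mathbf{p}) \le \log k_\delta + \delta\log E$, so I would prove this form. First I would fix the witnessing set: order the coordinates of $\mathbf{p}$ nonincreasingly and let $S$ collect the indices of the $k_\delta$ largest, so that $|S| = k_\delta$ and, by the very definition of $k_\delta$ as the minimum covering count, the captured mass $q := \sum_{e\in S} p_e$ satisfies $q \ge 1-\delta$; consequently the residual mass obeys $1-q \le \delta$.

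Next I would decompose the entropy across the binary partition $\{S, S^c\}$. Writing $\mathbf{p}_{\mid S}$ and $\mathbf{p}_{\mid S^c}$ for the normalized restrictions of $\mathbf{p}$ to $S$ and $S^c$, and $H_b$ for the binary entropy function, the grouping (chain-rule) identity for Shannon entropy gives $H(\mathbf{p}) = H_b(q) + q\,H(\mathbf{p}_{\mid S}) + (1-q)\,H(\mathbf{p}_{\mid S^c})$; the same three terms also drop out of applying Jensen's inequality (equivalently the log-sum inequality) separately to $\sum_{e\in S} p_e\log(1/p_e)$ and to $\sum_{e\notin S} p_e\log(1/p_e)$. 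I would then bound each conditional entropy by the log-cardinality of its support, namely $H(\mathbf{p}_{\mid S}) \le \log|S| = \log k_\delta$ and $H(\mathbf{p}_{\mid S^c}) \le \log|S^c| = \log(E-k_\delta) \le \log E$, and combine these with $q \le 1$ and $1-q \le \delta$ to obtain $H(\mathbf{p}) \le H_b(q) + \log k_\delta + \delta\log E$. Dropping the binary term then yields $H(\mathbf{p}) \le \log k_\delta + \delta\log E$, and exponentiating base two recovers the stated bound on $k_\delta$.

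The step I expect to be the real obstacle is precisely the control of the binary-partition term $H_b(q)$: it does not vanish and can reach one bit, so the displayed inequality is only tight up to this additive slack. A fully rigorous version either carries a small correction or restricts to the regime $E \gg 1$ (equivalently $k_\delta$ not $O(1)$), where the one-bit term is absorbed by the gaps already conceded in $q\le1$ and $E-k_\delta\le E$; this is the relevant regime anyway, since the lemma serves only as a heuristic sizing rule for the per-layer counts $N_{n,l}$. A secondary, smaller point requiring care is the bookkeeping direction of the residual mass: one must use $1-q \le \delta$ (rather than $q \le \delta$) so that the $S^c$-contribution is correctly charged $\delta\log E$ instead of the vacuous $\log E$.
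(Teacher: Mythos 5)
Your route is genuinely different from the paper's: the paper dispatches the lemma in one line by appealing to the asymptotic equipartition property (the typical-set cardinality bound), whereas you give an elementary, one-shot argument via the grouping identity $H(\mathbf{p}) = H_b(q) + q\,H(\mathbf{p}_{\mid S}) + (1-q)\,H(\mathbf{p}_{\mid S^c})$ together with the max-entropy bounds $H(\mathbf{p}_{\mid S}) \le \log k_\delta$ and $H(\mathbf{p}_{\mid S^c}) \le \log E$. Up to the last step your derivation is airtight and yields $H(\mathbf{p}) \le H_b(q) + \log k_\delta + \delta \log E$, i.e. $k_\delta \ge 2^{H(\mathbf{p}) - \delta\log E - H_b(q)} \ge \tfrac12\, 2^{H(\mathbf{p}) - \delta\log E}$.

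The final step, however---``dropping the binary term''---is not valid as written: $H_b(q) \ge 0$ sits on the upper-bound side, so discarding it strengthens the inequality you are trying to prove, and you rightly flag this as the obstacle. The obstruction is real and not an artifact of your method: the lemma as displayed actually fails without some such correction. For example, with $E = 2$, $\mathbf{p} = (1/2, 1/2)$ and $\delta = 1/2$, one expert already covers half the mass, so $k_\delta = 1$, while $2^{H(\mathbf{p}) - \delta\log E} = 2^{1/2} > 1$. The paper's own sketch does not close this gap either: the AEP is an asymptotic statement about i.i.d.\ sequences, and invoking it for a single layer's distribution is heuristic, silently absorbing exactly the one-bit (more precisely, $H_b$-type) slack that your calculation makes explicit. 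So your proposal should be read as a correct and arguably more honest proof of a slightly weaker bound, $k_\delta \ge 2^{H(\mathbf{p}) - \delta\log E - 1}$; recovering the paper's constant would require either restating the lemma with the binary-entropy correction or restricting to the large-$E$, non-constant-$k_\delta$ regime, as you yourself note---which is adequate for the lemma's role as a sizing heuristic for $N_{n,l}$, but is a weaker statement than the one claimed.
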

\begin{proof}[Proof sketch.]
This follows from the asymptotic equipartition property~\cite{Cover}: the typical set covering $(1-\delta)$ of the probability mass has cardinality lower bounded by $2^{H(\mathbf{p}) - \delta \log E}$.
\end{proof}
This result formalizes why entropy is a meaningful proxy for expert count: higher entropy implies more uniform demand, which in turn necessitates broader expert placement. By allocating counts proportionally to entropy under memory constraints, Algorithm~\ref{alg:expert_count} achieves a principled trade-off between resource efficiency and coverage.

\subsubsection{\textbf{Expert-to-Server Assignment}}
Algorithm~\ref{alg:expert_assign} selects the specific experts to place on each server, given the per-layer expert counts $N_{n,l}$ from Algorithm~\ref{alg:expert_count}. Each server $n$ maintains a layer-wise preference list $\mathcal{P}_n^l$ over experts in $\mathcal{E}_l$, sorted by empirical activation frequency $f_n^l(e)$.
We initialize the expert set $\mathcal{A}_n$ by selecting the top-$N_{n,l}$ most frequently activated experts from $\mathcal{P}_n^l$ for each layer $l$.
To ensure full expert coverage across the system (i.e., each expert $e \in \mathcal{E}_l$ is placed on at least one server), we iteratively identify unassigned experts and reallocate them to servers holding redundant experts. For each unassigned expert $e \in \mathcal{U}_l$, we prioritize placement on servers with fewest duplicate assignments. Within each such server, we replace the least-frequently used duplicate with $e$.
This procedure balances expert coverage and preference while respecting the memory-aware expert counts computed earlier.

\begin{algorithm}[t]
\caption{Expert-to-Server Assignment}
\label{alg:expert_assign}
\SetAlgoVlined
\SetKwInOut{Input}{Input}
\SetKwInOut{Output}{Output}

\Input{Expert counts $N_{n,l}$ from Algorithm~\ref{alg:expert_count}; Per-server preference lists $\mathcal{P}_n^l$ sorted by activation frequency $f_n^l(e)$}
\Output{Expert assignment $\mathcal{A}_n$ for each server $n$}

\For{$n = 1$ to $N$}{
    \For{$l = 1$ to $L$}{
        $\mathcal{A}_n^l \leftarrow$ Top-$N_{n,l}$ experts from $\mathcal{P}_n^l$\;
    }
}

\For{$l = 1$ to $L$}{
    $\mathcal{U}_l \leftarrow \{ e \in \mathcal{E}_l \mid \sum_{n=1}^N \mathbf{1}\{e \in \mathcal{A}_n^l\} = 0 \}$\;
    \While{$\mathcal{U}_l \neq \emptyset$}{
        Sort servers $n$ by number of duplicates in $\mathcal{A}_n^l$ (ascending)\;
        \For{$n = 1$ to $N$}{
            $e \leftarrow$ most frequent unassigned expert in $\mathcal{U}_l$ (according to $f_n^l(e)$)\;
            \If{$e \notin \mathcal{A}_n^l$}{
                $e_{\text{rep}} \leftarrow$ least-used duplicate expert in $\mathcal{A}_n^l$\;
                Replace $e_{\text{rep}}$ with $e$ in $\mathcal{A}_n^l$\;
            }
        }
        Update $\mathcal{U}_l$\;
    }
}
\end{algorithm}

\textit{Theoretical Guarantee.}
Recall that our proxy objective in Eq.~\ref{eq:comm_aware_obj} aims to minimize the total expected number of remote expert invocations.
Since an expert incurs no remote communication if it is placed locally on the requesting server, minimizing this is equivalent to maximizing the local frequency mass of placed experts on each server $n$. That is, the term
\[
\sum_{l = 1}^L \sum_{e \in \mathcal{E}_l} f_n^l(e) \cdot (1 - \mathbf{1}_{\text{remote}}(n,e)) = \sum_{l=1}^L \sum_{e \in \mathcal{A}_n \cap \mathcal{E}_l} f_n^l(e)
\]
represents the communication-saving utility of the local assignment $\mathcal{A}_n$. This motivates defining the following local utility function and analyzing the greedy assignment procedure used in Algorithm~\ref{alg:expert_assign}:

\begin{thm}[Greedy Approximation for Local Expert Assignment]
\label{lem:greedy_approx}
Let $f_n^l(e) \in [0,1]$ denote the normalized activation frequency of expert $e \in \mathcal{E}_l$ on server $n$ at layer $l$. Let $B_n = \sum_{l=1}^L N_{n,l}$ be the total expert budget for server $n$ given by Algorithm~\ref{alg:expert_count}. Define the local utility:
\[
U_n(S) = \sum_{l=1}^L \sum_{e \in S \cap \mathcal{E}_l} f_n^l(e),
\]
for any subset $S \subseteq \bigcup_{l=1}^L \mathcal{E}_l$ of size at most $B_n$. Then, the greedy assignment $\mathcal{A}_n$ returned by Algorithm~\ref{alg:expert_assign} satisfies:
\[
U_n(\mathcal{A}_n) \ge (1 - 1/e) \cdot U_n(\mathcal{A}_n^*),
\]
where $\mathcal{A}_n^*$ is the optimal size-$B_n$ assignment maximizing $U_n(\cdot)$. Equivalently, the expected communication cost is:
\[
C_n = \sum_{l=1}^L \sum_{e \in \mathcal{E}_l \setminus \mathcal{A}_n} f_n^l(e)
\]
satisfies:
\[
C_n \le C_n^* + \Delta_n,
\]
where $C_n^*$ is the minimal achievable cost and $\Delta_n = U_n(\mathcal{A}_n^*) - U_n(\mathcal{A}_n)$ is the gap from greedy approximation.
\end{thm}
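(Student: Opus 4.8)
The plan is to recognize $U_n(\cdot)$ as a monotone submodular function subject to a partition matroid constraint, and then invoke the classical $(1-1/e)$ guarantee for the greedy algorithm on such structures. First I would observe that $U_n(S) = \sum_{l=1}^L U_n^l(S\cap E_l)$ decomposes additively across layers, where each $U_n^l(T) = \sum_{e \in T} f_n^l(e)$ is a modular (hence submodular) set function on the ground set $E_l$. Since the sum of submodular functions on disjoint ground sets is submodular on the union, $U_n$ is submodular; monotonicity is immediate because $f_n^l(e)\ge 0$. The budget structure --- choosing exactly $N_{n,l}$ experts from each layer $l$ with $\sum_l N_{n,l} = B_n$ --- is precisely a partition matroid on $\bigcup_l E_l$ with the blocks $E_l$ and capacities $N_{n,l}$, so the feasible sets form the independent sets of a matroid of rank $B_n$.

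Second, I would argue that the initialization step of Algorithm~\ref{alg:expert_assign} (taking the top-$N_{n,l}$ experts by $f_n^l$ in each layer) is exactly the greedy / optimal choice for a modular objective under a partition matroid: greedily adding the highest-marginal-gain element that keeps the set independent, for a modular function, amounts to sorting within each block and taking the top $N_{n,l}$. Because the objective is modular (not merely submodular), this greedy selection is in fact \emph{optimal} for the unconstrained-coverage relaxation --- it maximizes $U_n$ over all size-$B_n$ sets respecting the per-layer caps --- which gives the stated $(1-1/e)$ bound with room to spare. For the equivalent cost statement, I would write $C_n = \sum_{l}\sum_{e\in E_l} f_n^l(e) - U_n(\mathcal{A}_n)$ and $C_n^* = \sum_l \sum_{e\in E_l} f_n^l(e) - U_n(\mathcal{A}_n^*)$ (the same total mass minus the optimum), so that $C_n - C_n^* = U_n(\mathcal{A}_n^*) - U_n(\mathcal{A}_n) = \Delta_n$, which is the claimed inequality $C_n \le C_n^* + \Delta_n$, with $\Delta_n = 0$ before the coverage-repair phase.

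The main obstacle is the coverage-repair loop (the \textbf{while} $\mathcal{U}_l \neq \emptyset$ phase), which swaps out locally-preferred duplicates for globally-unassigned experts and can only \emph{decrease} $U_n$ on some servers, so the clean optimality of the initialization no longer literally holds for the returned $\mathcal{A}_n$. I would handle this by defining $\mathcal{A}_n^*$ as the optimal size-$B_n$ set and absorbing all the loss from repair into the slack term $\Delta_n$: since $U_n(\mathcal{A}_n^{\text{init}}) \ge (1-1/e) U_n(\mathcal{A}_n^*)$ trivially (in fact with equality-or-better, as $\mathcal{A}_n^{\text{init}}$ is optimal for the relaxation), and each repair swap on server $n$ replaces $e_{\text{rep}}$ (least-used duplicate) with $e$, the per-swap decrease is $f_n^l(e_{\text{rep}}) - f_n^l(e)\ge 0$; bounding the total number of such swaps by the number of distinct experts (at most $\sum_l |E_l|$) and the per-swap loss by $\max_{e} f_n^l(e)\le 1$ yields an explicit bound on $\Delta_n$. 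Finally I would note the monotone-submodular-plus-matroid machinery of Nemhauser--Wolsey--Fisher is what underwrites the $(1-1/e)$ factor in the general case, so the theorem holds as stated; the cost reformulation is then just algebra.
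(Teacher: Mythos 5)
Your proposal follows essentially the same skeleton as the paper's own (very terse) proof sketch: note that $U_n$ is monotone and submodular because it sums non-negative weights, invoke a greedy $(1-1/e)$ guarantee, and obtain the cost statement by rewriting the proxy objective over unassigned experts. Where you differ is in precision rather than in route: you model the per-layer counts $N_{n,l}$ as a partition matroid and observe that, since $U_n$ is actually \emph{modular}, the initialization of Algorithm~\ref{alg:expert_assign} (top-$N_{n,l}$ per layer) is exactly optimal subject to those caps. This is a more faithful description of what the algorithm does than the paper's appeal to "greedy under a cardinality constraint," and it makes the approximation claim for the initialization immediate. You also correctly expose that the second claim of the theorem is an identity, $C_n = C_n^* + \Delta_n$, once $C_n^*$ is written as total mass minus $U_n(\mathcal{A}_n^*)$ — something the paper leaves implicit.

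Three caveats. First, your closing appeal to Nemhauser--Wolsey--Fisher is inaccurate: plain greedy under a general matroid constraint gives a $1/2$ guarantee, not $(1-1/e)$ (the latter requires continuous greedy/pipage rounding); this does not hurt you, because modularity already gives exact optimality within the partition constraint, but it should not be cited as the underwriting machinery. Second, your "with room to spare" step silently assumes that the benchmark $\mathcal{A}_n^*$ also respects the per-layer caps; if $\mathcal{A}_n^*$ is the unrestricted size-$B_n$ optimum, a per-layer-capped selection can fall below $(1-1/e)$ (e.g., when almost all activation mass sits in a layer whose $N_{n,l}$ is small), so the multiplicative bound needs the partition-constrained reading of $\mathcal{A}_n^*$ — an ambiguity the paper's sketch shares, since Algorithm~\ref{alg:expert_assign} is not the cardinality greedy it is described as. Third, absorbing the coverage-repair swaps into $\Delta_n$ rescues only the tautological cost inequality; it does not establish $U_n(\mathcal{A}_n)\ge(1-1/e)\,U_n(\mathcal{A}_n^*)$ for the post-repair assignment, because a swap may replace a frequently used duplicate with a rarely activated expert and the loss is additive, not multiplicatively controlled. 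To be fair, the paper's sketch ignores the repair phase entirely, so your treatment is the more honest of the two; but neither argument, as written, carries the $(1-1/e)$ bound past the initialization step.
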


\begin{proof}[Proof sketch.]
The function $U_n(S)$ is monotonic and submodular as it aggregates independent, non-negative activation weights. The greedy algorithm that selects experts with the largest $f_n^l(e)$ values maximizes $U_n$ under a cardinality constraint, which yields a $(1 - 1/e)$ approximation. The cost equivalence follows directly by rewriting the proxy objective in Eq.~\eqref{eq:comm_aware_obj} in terms of unassigned experts.
\end{proof}

\subsubsection{\textbf{Expert Migration}}
\label{sec:mig}
To adapt to dynamic workloads, evolving activation patterns, or hardware fluctuations, the system periodically updates the expert placement strategy. At fixed intervals (e.g., every 5 minutes), the global scheduler re-runs the placement pipeline in Section~\ref{sec:our_placement} using the latest activation statistics, producing a candidate plan $\mathcal{P}'$.

Adopting a new placement introduces a migration cost due to transferring expert models across servers. We quantify it as:
\begin{align}
\label{eq:migration}
T_{\text{mig}}(\mathcal{P}, \mathcal{P}') = \sum_{n,g,e} \mathbb{I}[z_{n,g}^e \ne z_{n,g}^{e'}] \cdot \frac{m_e}{\texttt{speed}_{n,g}},
\end{align}
where $z_{n,g}^e$ and $z_{n,g}^{e'}$ are the expert placements before and after migration, $m_e$ is the expert size, and $\texttt{speed}_{n,g}$ denotes the I/O bandwidth of GPU $g$ on server $n$.
The system compares the total expected communication cost under the current plan $\mathcal{P}$ and the new plan $\mathcal{P}'$, using the proxy objective in Eq.~\ref{eq:comm_aware_obj}. The new plan is adopted only if the improvement in communication cost outweighs the migration overhead:
\begin{align}
C(\mathcal{P}') + T_{\text{mig}}(\mathcal{P}, \mathcal{P}') < C(\mathcal{P}),\label{eq:migration-condition}
\end{align}
where $C(\cdot)$ denotes the expected remote expert invocation cost under a given placement. This policy ensures that migrations occur only when they are beneficial in the long run.

        \section{Implementation and Experimental Results}
\label{sec:exp}

Based on the open-source code of MoE-infinity~\cite{moe-infinity}, we implement \name. Building upon the MoE-Infinity framework, we have developed a multi-server MoE inference system.
This flexible architecture allows us to accommodate various hardware configurations and optimize resource utilization to meet the demands of various deployment scenarios.

\begin{table*}
\begin{tabular}{clcccccccc}
\hline
\multicolumn{1}{l}{}                                                                                                    &                                      & \multicolumn{4}{c}{\textbf{DeepSeek-V2-Lite}}                                                    & \multicolumn{4}{c}{\textbf{Mixtral 8$\times$7B}}                            \\ \hline
\multicolumn{1}{c|}{\textbf{Dataset}}                                                                                   & \multicolumn{1}{c|}{\textbf{Method}} & \textbf{Server1} & \textbf{Server2} & \textbf{Server3} & \multicolumn{1}{c|}{\textbf{Total Avg}} & \textbf{Server1} & \textbf{Server2} & \textbf{Server3} & \textbf{Total Avg} \\ \hline
\multicolumn{1}{c|}{\multirow{5}{*}{\textbf{\begin{tabular}[c]{@{}c@{}}BigBench\\ (10s Poisson\\ arrival)\end{tabular}}}}  & \multicolumn{1}{l|}{Uniform}         & 48.55            & 17.28            & 9.43             & \multicolumn{1}{c|}{21.66}              & 6.44             & 6.04             & 3.44             & 5.08               \\
\multicolumn{1}{c|}{}                                                                                                   & \multicolumn{1}{l|}{Redundance}      & 25.88            & 13.77            & 4.79             & \multicolumn{1}{c|}{13.08}              & 5.96             & 6.16             & 3.07             & 4.85               \\
\multicolumn{1}{c|}{}                                                                                                   & \multicolumn{1}{l|}{SmartMoE}        & 40.28            & 16.43            & 9.51             & \multicolumn{1}{c|}{19.39}              & 3.76             & 1.59             & 2.03             & 2.30               \\
\multicolumn{1}{c|}{}                                                                                                   & \multicolumn{1}{l|}{EPLB}            & \underline{23.93}      & \underline{7.08}       & \underline{3.12}       & \multicolumn{1}{c|}{\underline{9.56}}         & \underline{3.60}       & \textbf{1.53}    & \underline{1.82}       & \underline{2.16}         \\
\multicolumn{1}{c|}{}                                                                                                   & \multicolumn{1}{l|}{Ours}            & \textbf{14.67}   & \textbf{5.85}    & \textbf{2.49}    & \multicolumn{1}{c|}{\textbf{6.63}}      & \textbf{3.52}    & \underline{1.57}       & \textbf{1.67}    & \textbf{2.09}      \\ \hline
\multicolumn{1}{c|}{\multirow{5}{*}{\textbf{\begin{tabular}[c]{@{}c@{}}MultiData\\ (20s Poisson\\ arrival)\end{tabular}}}} & \multicolumn{1}{l|}{Uniform}         & 24.85            & 30.63            & 36.48            & \multicolumn{1}{c|}{30.65}              & 9.20             & 10.26            & 13.31            & 10.92              \\
\multicolumn{1}{c|}{}                                                                                                   & \multicolumn{1}{l|}{Redundance}      & \underline{15.96}      & 24.76            & 12.36            & \multicolumn{1}{c|}{17.70}              & \textbf{8.28}    & 10.32            & 11.08            & 9.90               \\
\multicolumn{1}{c|}{}                                                                                                   & \multicolumn{1}{l|}{SmartMoE}        & 23.68            & 30.41            & 35.21            & \multicolumn{1}{c|}{29.77}              & 9.82             & 11.89            & 13.84            & 11.85              \\
\multicolumn{1}{c|}{}                                                                                                   & \multicolumn{1}{l|}{EPLB}            & 16.36            & \underline{19.07}      & \underline{9.35}       & \multicolumn{1}{c|}{\underline{14.93}}        & 8.69             & \textbf{9.52}    & \underline{9.89}       & \underline{9.36}         \\
\multicolumn{1}{c|}{}                                                                                                   & \multicolumn{1}{l|}{Ours}            & \textbf{13.96}   & \textbf{16.72}   & \textbf{9.20}    & \multicolumn{1}{c|}{\textbf{13.29}}     & \underline{8.30}       & \underline{9.80}       & \textbf{9.41}    & \textbf{9.17}      \\ \hline
\end{tabular}
\caption{Serve latency (seconds) comparison between DeepSeek-V2-Lite and Mixtral 8$\times$7B models across different methods on BigBench (10s Poisson arrival) and MultiData (20s Poisson arrival) datasets. Results are reported for three servers and their total average. Bold values represent the best serving latency, and underlined values represent the second best. \name achieves the lowest average latency, particularly on models with a larger number of experts such as DeepSeek-V2-Lite. }
\label{tab:all_latency_results}
\end{table*}

\subsection{Experimental Setup}
\para{Hardware.} Our testbed comprised a machine with four A100 (40GB) GPUs and 256GB RAM, configured to simulate three edge servers with GPU allocations of 1, 1, and 2 respectively. Inter-server communication was established through a Docker network, with bandwidth limits constrained to 500 Mbps using Traffic Control (tc) in the Linux kernel.

\para{MoE Model.} Our experiments employ two MoE architectures: (1) Mixtral-8x7B~\cite{jiang2024mixtral} (32 layers with 2 active experts out of 8 per layer), whose complete parameter set exceeds single-GPU memory capacity, and (2) Deepseek-V2-Lite~\cite{liu2024deepseekv2} (26 layers selecting 8 active experts from 64 per layer, totaling 1,664 possible expert combinations), which presents greater routing complexity while being more memory-efficient. To simulate realistic edge deployment scenarios, we artificially constrain GPU memory allocation to 70\% of total capacity for Mixtral-8x7B and 30\% for Deepseek-V2-Lite, reflecting their respective memory requirements and operational constraints.

\para{Dataset.} Our experimental evaluation incorporated four widely adopted benchmark datasets: BIG-bench~\cite{srivastava2023beyond}, MMLU-Pro~\cite{wang2024mmlu}, WikiText~\cite{merity2016pointer}, and Tako~\cite{li2023taco}. 
\begin{itemize}[wide, labelindent=0pt]
\item BIG-bench: The Beyond the Imitation Game Benchmark (BIG-bench) is a collaborative evaluation framework comprising over 200 diverse tasks. We constrained the model's output length to match the answer length specified in each dataset.
\item MMLU-Pro: MMLU-Pro has more than 12K questions across 14 domains. We format inputs as question-choice pairs and constrain output length to the correct answer's length.
\item WikiText: The WikiText language modeling dataset is curated from Wikipedia's verified Featured articles. We constrain the model's maximum output length to 20 tokens.
\item Tako: While evaluating on TACO (a benchmark for code generation tasks), we use its test set problems as model inputs while constraining the maximum output length to 20 tokens.
\end{itemize}

\para{Server configurations.} We evaluated two server setups: a specialized setup where each server handled distinct BIG-bench tasks (abstract narrative, arithmetic reasoning, ASCII recognition), and a heterogeneous setup distributing MMLU-Pro, WikiText and Tako across three servers respectively.  

\para{Baselines.}
We conducted comparisons between our proposed method and the following baseline approaches:
\begin{itemize}[wide, labelindent=0pt]
\item Uniform. Experts are uniformly distributed across all available devices. For instance, in our experimental setup with 4 GPUs, each device hosts 2 random experts per layer for the Mixtral model and 16 experts per layer for Deepseek-V2-Lite. This placement method aligns with the expert parallelism implementation in the Megatron-LM~\cite{shoeybi2019megatron} project.
\item Redundance. We propose this heuristic that enables expert duplication across servers by randomly distributing experts up to each device's capacity. When the total available GPU memory in the system surpasses the model's memory demands, such redundancy  improves memory utilization compared to Uniform placement.
\item SmartMoE~\cite{zhai2023smartmoe}. While SmartMoE's core focus is training optimization, our framework specifically implements SmartMoE's placement module, which strategically distributes experts across GPUs based on real-time workload. The placement algorithm balances workload distribution to maintain uniform expert allocation across devices.
\item Expert Parallelism Load Balancer (EPLB)~\cite{deepseek2025eplb}. EPLB is a redundant expert strategy in DeepSeek-V3~\cite{liu2024deepseek} that duplicates high-load experts and heuristically distributes them to balance GPU workloads. Since the open-source implementation only supports homogeneous systems, we re-implement the algorithm within our framework to accommodate heterogeneous system configurations.
\end{itemize}

\para{System Parameter Settings.} The system evaluates potential expert migrations every 5 minutes. We employ historical communication and computation time of expert execution as estimation metrics. Specifically, the average values of all executions between the last placement change and the current moment are used as reference for calculating the cost of new placement. Computation time is a load-dependent variable which is updated at 30-second intervals.

\subsection{System Performance}
\para{Inference Latency.}
As shown in Tab.~\ref{tab:all_latency_results}, robustness across datasets and server configurations highlights practical superiority of \name.
Uniform and SmartMoE achieve limited success due to inefficient resource utilization, while Redundance, though introducing expert duplication, suffers from higher latency due to suboptimal implementation.
On the BigBench dataset, \name reduces the average latency by 30.6\% compared to EPLB, the second-best method for DeepSeek-V2-Lite. For Mixtral, it achieves a slight but consistent improvement over EPLB. On the MultiData dataset, \name demonstrates a 10.9\% reduction in latency compared to EPLB for DeepSeek-V2-Lite and a 2.0\% improvement for Mixtral.
Since EPLB is a load-balancing algorithm that does not account for cross-machine communication reduction, it underperforms compared to \name. For the Mixtral model, where the number of experts is relatively small and server capacity is limited, the room for optimization is narrower, resulting in smaller performance gains for \name.

\para{Local Compute Ratio.}
To assess the impact of cross-machine communication on inference latency, we measured the local computation ratio as an indicator of communication overhead. Except for the Uniform and Redundance baselines, all methods adopt \name's migration strategy but differ in their expert placement strategies. As shown in Fig.~\ref{fig:request_radio}, \name consistently achieves higher local computation ratios. Notably, the advantage becomes more pronounced following explicit migration events triggered at 5 minutes (BigBench) and 10 minutes (MultiData).

\begin{figure}[htb]
    \centering
     \includegraphics[width=0.49\textwidth]{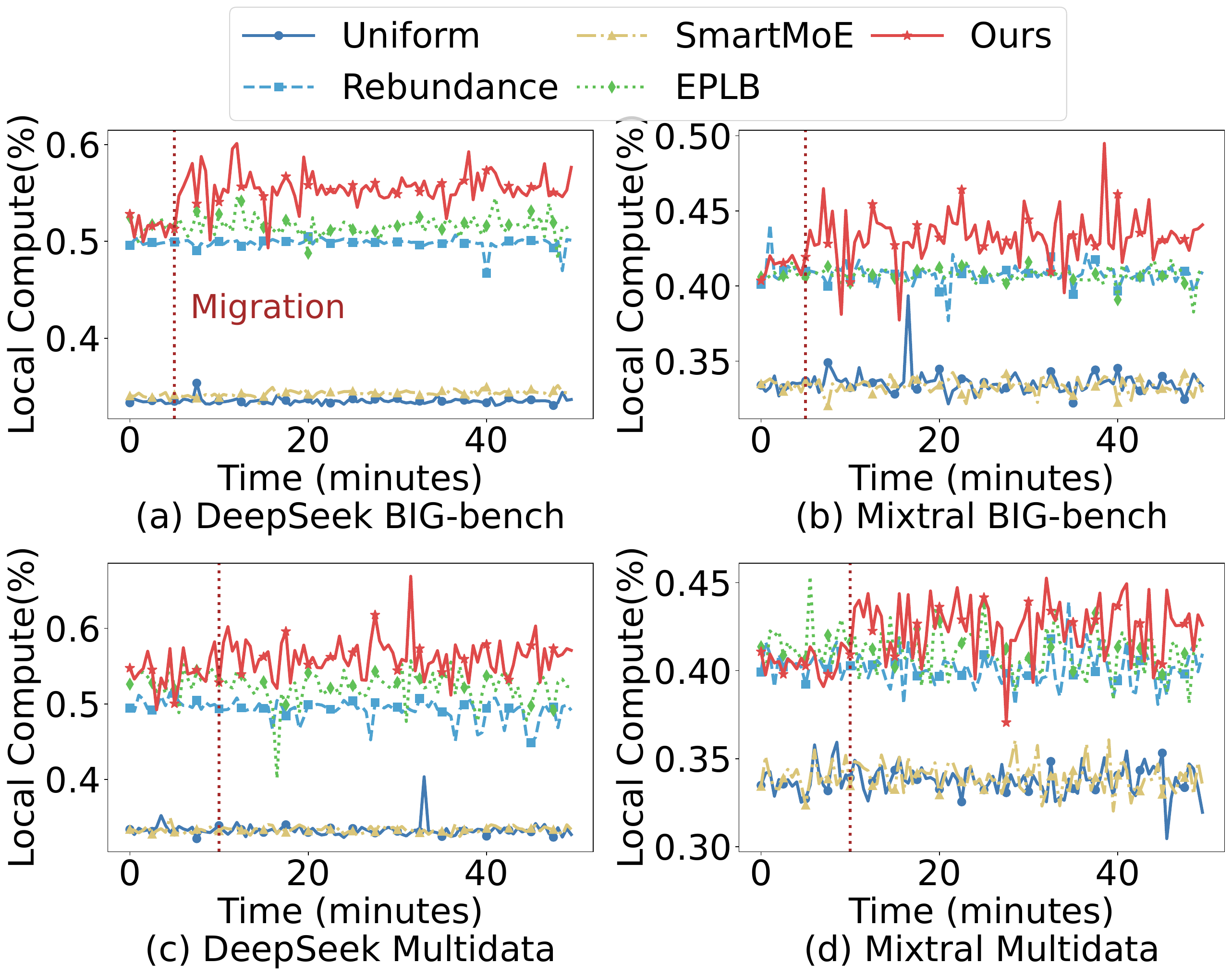}
    \caption{Evolution of local compute ratio over inference runtime for different methods across model-dataset configurations.}
    \label{fig:request_radio}
\end{figure}

\para{Effectiveness of Migration.}
We evaluated migration effectiveness by comparing adaptive and static systems using Deepseek-V2-Lite (Fig.~\ref{fig:migration_analysis}), from 200 MultiData to 200 BIG-bench requests per server. Both systems showed identical initial performance, but the migration-enabled approach (``w'' in Figure~\ref{fig:sub1}) achieved significantly higher local computation after the first migration. 
We now explain the three migrations shown in Figure~\ref{fig:sub1}. The first one is triggered because the system detects a divergence between the accumulated inference data and the dataset used for the initial placement, activating the condition in Eq.\eqref{eq:migration-condition}. The second migration follows an inference data shift (see Data Change in Figure~\ref{fig:sub2}). However, since the cost $C(\cdot)$ in Eq.\eqref{eq:migration-condition} calculated at this time is based on the accumulated inference data mixing the old and new data, a third migration occurs later as more new data accumulates. 

The migration version achieved a 10\% reduction in average latency (7.48 to 6.73) of all requests compared to the non-migration.

\begin{figure}[htb]
    \centering
    \begin{subfigure}[b]{0.43\textwidth}
        \includegraphics[width=\textwidth]{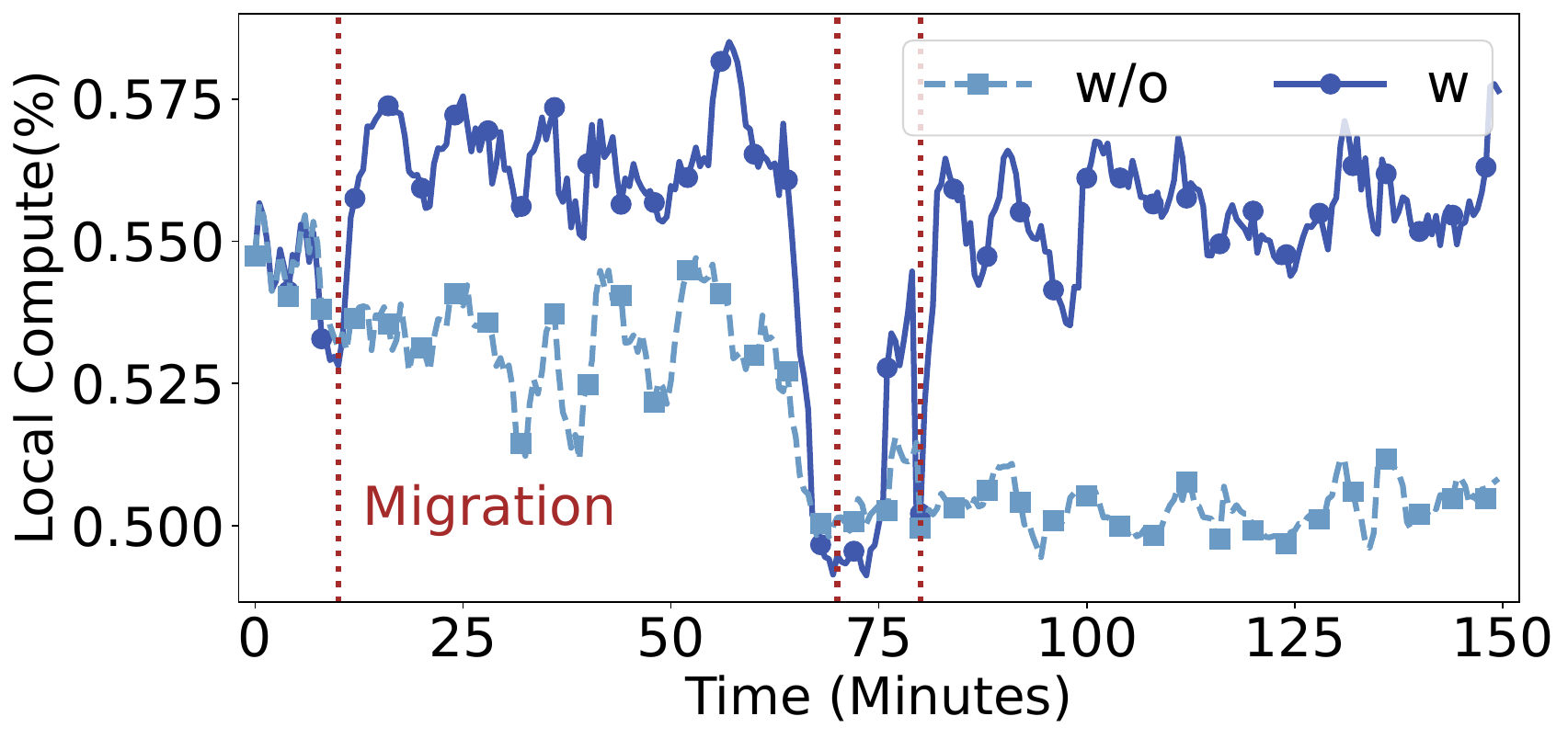}
        \caption{Ratio of computations successfully completed locally without cross-server communication.}
        \label{fig:sub1}
    \end{subfigure}
    
    \vspace{0.1cm}
    
    \begin{subfigure}[b]{0.43\textwidth}
        \includegraphics[width=\textwidth]{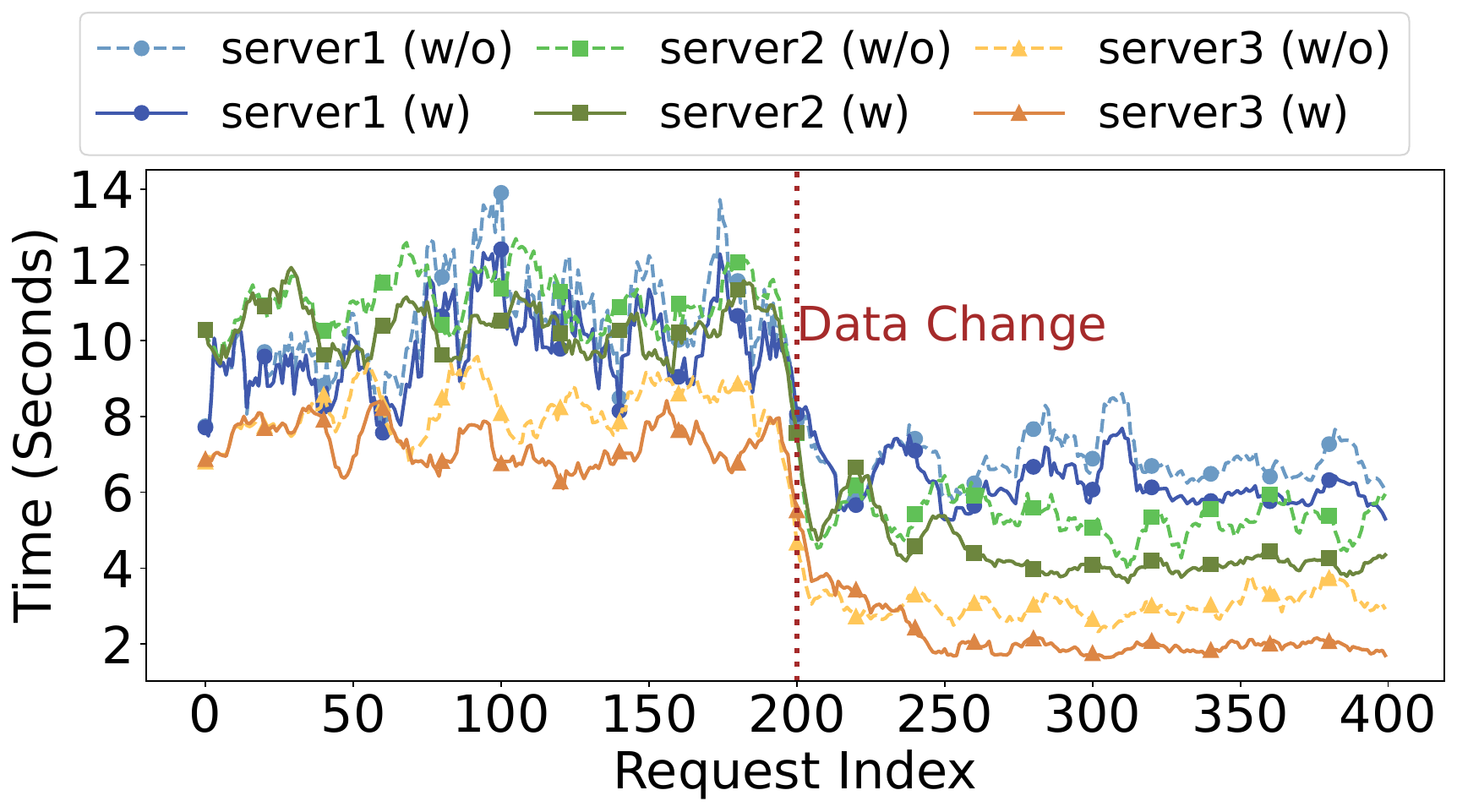}
        \caption{Latency comparison of migration across servers.}
        \label{fig:sub2}
    \end{subfigure}
    
    \caption{Comprehensive evaluation of migration efficiency. Subfigure (a) shows the improvement in local task computation enabled by the proposed migration strategy, while subfigure (b) evaluates the latency impact when migration becomes necessary across different servers.}
    \label{fig:migration_analysis}
\end{figure}

\subsection{Large-scale Simulation}

\para{Objective.}
To validate the scalability and performance of our proposed system, we develop an event-driven simulator with the following objectives: (1) demonstrate performance improvement at scale by varying GPU counts from 4 to 256; (2) quantify the impact of communication bandwidth (configured using the Linux tc tool across a range of 100 Mbps to 1000 Mbps) on system inference run time.

\para{Simulation Setup.}
The simulator's primary function orchestrates the inference process by: selecting the next layer for processing; updating server timestamps; recording completion times; computing inference latency. Through incremental time accumulation, it accurately models the temporal progression of the entire system.
The simulator contains several key components: (1) Prompt Sequence Generator: We integrate a Poisson arrival time sequence with operational data collected from \name, the dataset contains both expert selection patterns and token processing volumes. (2) Prompt Routing Generator: To simulate data flow under varying expert placement configurations. (3) Communication \& Computation Time Consumption Estimator: We develop a linear model to predict processing time per token batch. (4) Time Stamp Calculator. This module captures both communication and computation events, timestamping each operation. (5) System Timeline Scheduler. The scheduler carefully interleaves communication and computation events according to our prescribed design.

\para{Performance Evaluation.}
As shown in Fig.~\ref{fig:simulation_sub1}, as GPU volume increases, the average time consumption per prompt decreases by 9\% (Poisson, 15s arrival) to 19\% (Poisson, 8s arrival); furthermore, our simulations demonstrate that performance improvements are more pronounced when input prompts are more intensive (Poisson 8s arrival compared with Poisson 15s arrival); Fig.~\ref{fig:simulation_sub2} demonstrates that higher bandwidth yields a substantial reduction in average processing time, achieving over 55\% improvement in the 4-GPU configuration; however, this benefit diminishes with GPU scaling, declining to just 35\% for the 256-GPU configuration.

\begin{figure}[htb]
    \centering
    \begin{subfigure}[b]{0.24\textwidth} 
        \includegraphics[width=\textwidth]{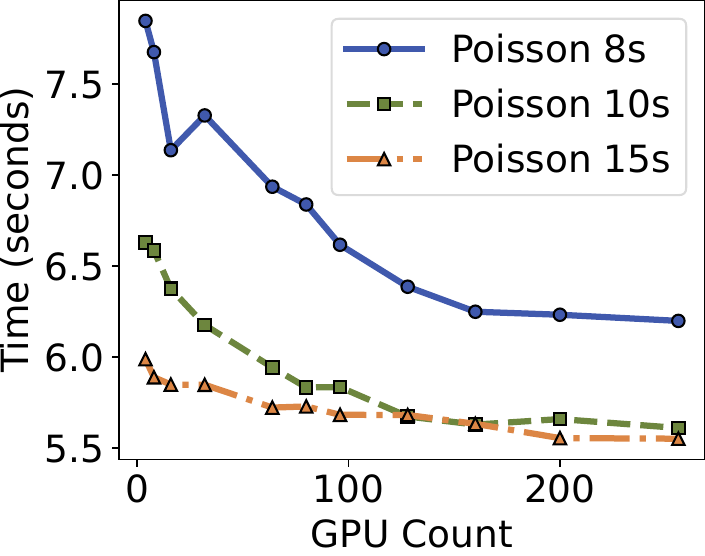}
        \caption{Time cost of different GPU counts and queuing latency.}
        \label{fig:simulation_sub1}
    \end{subfigure}
    \begin{subfigure}[b]{0.24\textwidth}
        \includegraphics[width=\textwidth]{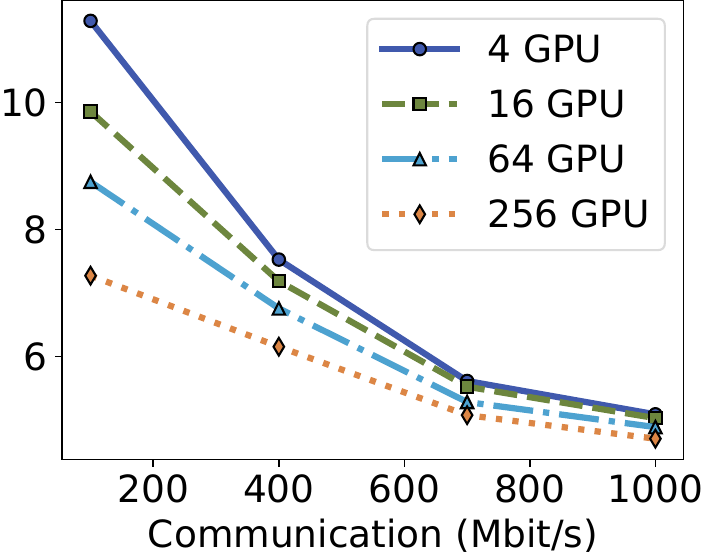}
        \caption{The influence of bandwidth limitation.}
        \label{fig:simulation_sub2}
    \end{subfigure}
    
    \caption{Simulation results for system scalability verification. Subfigure (a) shows potential improvement of system latency with GPU volume increase, while (b) indicates the bandwidth limitation has a significant impact on system performance.}
    \label{fig:simulation}
\end{figure}

        \section{Related Work}
\label{sec:related_1}
Several edge serving works~\cite{li2022joint, li2023task, li2023tapfinger} have explored efficient strategies to achieve low-latency application services, yet they fail to account for the unique architectural characteristics and parallelization patterns inherent in MoE models.
Recent efforts to optimize Mixture-of-Experts (MoE) models span both training and inference phases, targeting challenges such as expert placement, memory efficiency, and communication overhead. For training scenarios, FasterMoE~\cite{he2022fastermoe} optimizes training by trading off hidden-state transfers against expert migrations, while SmartMoE~\cite{zhai2023smartmoe}, FlexMoE~\cite{nie2023flexmoe} and Prophet~\cite{wang2023prophet} develop GPU load balancing strategies. Lazarus~\cite{wu2024lazarus} further considers system-level scalability. However, these training-oriented approaches fail to address the heterogeneous communication and computational capabilities inherent in multi-server inference environments.
For inference on memory-constrained devices, standalone systems have explored multi-level memory hierarchies and expert prefetching strategies. Methods include storing excess experts in CPU memory (as in Pre-gated MoE~\cite{hwang2024pre}, which introduces pre-gating for parallel loading at potential accuracy cost), predicting expert activations to prefetch parameters (Lina~\cite{li2023accelerating}, MoE-Infinity~\cite{xue2024moe}, EdgeMoE~\cite{yi2023edgemoe}, AdapMoE~\cite{zhong2024adapmoe}, SwapMoE~\cite{kong2024swapmoe}, SIDA~\cite{du2024sida}), and computing experts directly on CPU (Fiddler~\cite{kamahori2024fiddler}). While effective for individual resource-constrained devices, these solutions neglect cross-server resource coordination. In contrast, \name systematically coordinates distributed resources through optimized collaboration.

More recently, inference-time expert placement has been explored in systems like Moetuner~\cite{go2025moetuner} and DeepSeek-V3’s EPLB~\cite{deepseek2025eplb}. These methods use historical routing or expert redundancy to balance loads, but are generally designed for cluster-scale datacenter deployment, without accounting for edge-specific constraints like heterogeneous hardware or limited bandwidth. Complementary to placement strategies, communication-aware MoE systems, such as PipeMoE~\cite{shi2023pipemoe}, ScheMoE~\cite{shi2024schemoe}, and TUTEL~\cite{hwang2023tutel}, aim to improve system efficiency through pipelining, hybrid parallelism, or fine-grained communication scheduling. While primarily designed for training, their techniques offer complementary benefits for distributed inference. In contrast to these prior efforts, \name targets collaborative inference across memory-limited edge servers. It introduces activation-aware placement and lightweight migration strategies tailored to heterogeneous devices and variable workloads—an underexplored yet increasingly important setting for scalable MoE deployment.

        \section{Conclusion}
\label{sec:conclusion}
In this paper, we present \name, a novel system designed for efficient MoE inference in edge computing environments. Unlike existing solutions tailored for cloud-scale clusters, \name is specifically architected to address the unique challenges of edge deployments, such as limited device memory, heterogeneous hardware, and inefficient network conditions. A few recent works on edge MoE propose memory-efficient strategies, targeting either a single device or homogeneous resource configurations. To this end, \name introduces adaptive expert placement and dynamic migration mechanisms that jointly optimize resource utilization and inference efficiency under workload variability. Extensive testbed experiments and simulation studies demonstrate the system’s effectiveness, adaptability to dynamic conditions, and scalability in real-world edge scenarios.
	\clearpage
	\bibliographystyle{IEEEtran}
	\bibliography{main.bib}

\end{document}